\newtheorem{thm}{Theorem}
\newtheorem{cor}[thm]{Corollary}
\newtheorem{prop}[thm]{Proposition}
\theoremstyle{definition}
\newtheorem{myDef}[thm]{Definition}
\title{Complexity of evolutionary equilibria in static fitness landscapes}
\author[1]{Artem Kaznatcheev\thanks{artem.kaznatcheev@mail.mcgill.ca}}
\date{August 22, 2013}
\affil[1]{School of Computer Science, McGill University, Montreal, Canada}
\begin{document}

\maketitle

\abstract{A fitness landscape is a genetic space -- with two genotypes adjacent if they differ in a single locus -- and a fitness function. Evolutionary dynamics produce a flow on this landscape from lower fitness to higher; reaching equilibrium only if a local fitness peak is found. I use computational complexity to question the common assumption that evolution on static fitness landscapes can quickly reach a local fitness peak. I do this by showing that the popular $NK$ model of rugged fitness landscapes is PLS-complete for $K \geq 2$; the reduction from Weighted 2SAT is a bijection on adaptive walks, so there are $NK$ fitness landscapes where every adaptive path from some vertices is of exponential length. Alternatively -- under the standard complexity theoretic assumption that there are problems in PLS not solvable in polynomial time -- this means that there are no evolutionary dynamics (known, or to be discovered, and not necessarily following adaptive paths) that can converge to a local fitness peak on all $NK$ landscapes with $K = 2$. Applying results from the analysis of simplex algorithms, I show that there exist single-peaked landscapes with no reciprocal sign epistasis where the expected length of an adaptive path following strong selection weak mutation dynamics is $e^{O(n^{1/3})}$ even though an adaptive path to the optimum of length less than $n$ is available from every vertex. The technical results are written to be accessible to mathematical biologists without a computer science background, and the biological literature is summarized for the convenience of non-biologists with the aim to open a constructive dialogue between the two disciplines.}

\newpage
\epigraph{Nothing in biology makes sense except in the light of evolution.}{Theodosius Dobzhansky}

\section{Introduction}

At the same time (1936-1947) as the birth of computer science, biologists were building the modern evolutionary synthesis: reconciling Mendelian genetics with Darwin's evolution by natural selection, and explaining the broad-scale changes observed by palaeontologists in terms of changes in local populations and their genes. They unified several branches of biology that had been diverging at the dawn of the 20th century into a single paradigm that persists today. In the intervening years, the importance of evolution has not decreased, and mathematical biology and modeling of evolutionary adaptation only grew. Surprisingly, fundamental questions about the genetic basis of adaptation remained unanswered~\cite{O05}, or even unasked: how quickly does evolution proceed? What is the distribution of effects on fitness during a typical adaptation? Does the magnitude of the effect change as the process approaches an optimum? Is the assumption of reachable local optima reasonable? My biological goal here is to ask the first and last question.

My secondary goal is to turn the lens of theoretical computer science onto the genetic theory of adaptation, and provide answers to these two questions. The most prominent attempt of theoretical computer science to study the evolution and adaptation is Valiant's theory of evolvability~\cite{evolvability}. For Valiant, evolution is a subset of PAC learning~\cite{PAC}, in particular it is equivalent to a restricted version of statistical query learning~\cite{F08}. This approach has generated a number of technical proofs and spawned a small subfield of research. Unfortunately, all of the attention has come from computer scientists, with no biologists working with the model. Although the model is a natural way for computational learning theorists to think about evolution, it is not expressed in languages and conceptual metaphors familiar to biologists. Hence, it is difficult for them to build on this model or use its results. As a newcomer to biology, my method is non-prescriptive; unlike the machine learning approach to evolution, I do not suggest new metaphors or models of adaptation, but offer new tools -- combinatorics, analysis of algorithms, and computational complexity -- to study extant models familiar to mathematical biologists. As such, I focus on presenting original results in a matter accessible to a biologist with mathematical training, but without a necessary focus in the details of theoretical computer science. By making a conscious effort to use the language of biology, and to connect (at least in a qualitative way) to empirical data, I hope to introduce theoretical computer scientists to the models and methods used by biologists. Instead of proving hard technical results, I concentrate on raising interesting questions for both computer scientists and biologists; the aim is to open a constructive dialogue between the two disciplines.

To start the conversation, I focus on a popular conceptual metaphor in biology -- the fitness landscape. I begin with biological background in section~\ref{sec:bio}, with an emphasis on epistasis (section~\ref{sec:epistasis}) and a breif summary of our empirical knowledge (section~\ref{sec:empirical}). I introduce the most popular model of fitness landscapes -- the $NK$ model -- in section~\ref{sec:NK}. In section~\ref{sec:PLS}, I prove that this model is PLS-complete for $K \geq 2$. The reduction I build is from Weighted 2SAT~\cite{SY91} and a bijection on adaptive walks. Thus, there are $NK$ fitness landscapes where every adaptive path from some vertices to a local maximum is of exponential length -- an unconditional result. Alternatively -- under the standard complexity theoretic assumption that there are problems in PLS not solvable in polynomial time -- this means that there are no evolutionary dynamics that can converge to a local fitness peak on all $NK$ landscapes with $K = 2$. The dynamics in this conditional result are unrestricted (except being polynomial time) and can include tracking populations, sexular reproduction, recombination, strong mutation, and neutral drift; although the specification of what it means to be a fitness peak remains fixed. In section~\ref{sec:simplex}, I adapt a theorem from the analysis of simplex algorithms~\cite{MS06} to show that there exist non-rugged landscapes with a single fitness peak and no reciprocal sign epistasis (see section~\ref{sec:epistasis} for a definition), where strong selection weak mutation dynamics take $e^{O(n^{1/3})}$ steps to find the unique peak, even though an adaptive path of length less than $n$ exists from every vertex. I briefly discuss the consequences of these results for biology in the final section.

\section{Biological background}
\label{sec:bio}

\subsection{Fitness landscapes}

In 1932, Wright introduced the metaphor of a fitness landscape~\cite{W32}. The landscape is a genetic space where each vertex is a possible genotype and an edge exists between two vertices if a single mutation transforms the genotype of one vertex into the other. In the case of a biallelic system we have $n$ loci (positions), at each of which it is possible to have one of two alleles, thus our space is the $n$-bit binary strings $\{0,1\}^n$~\footnote{We could also look at spaces over larger alphabets, $4$ letters for sequence space of DNA, or $20$ letters for amino acids; but the biallelic system is sufficiently general for us.}. A mutation can flip any loci from one allele to the other, thus two strings $x,y \in \{0,1\}^n$ are adjacent if they differ in exactly one bit; the landscape is an $n$-dimensional hypercube with genotypes as vertices. The last ingredient, fitness, is given by a function that maps each string to a non-negative real number.

Individual organisms can be thought of as inhabiting the vertices of the landscape corresponding to their genotype. The probability (or rate, for asexual organisms) of each organism reproducing is proportional to the fitness value of its vertex\footnote{Note that this fitness is independent of the distribution of other agents, and hence this model is only valid for frequency-independent selection. This is an overreach of reductionism in ignoring the distribution of the whole population, but one that can sometimes be theoretically justified and at other times is conceptually or empirically convenient. In practice, this makes the model only valid on relatively short timescales since organisms tend to change their environment (and thus selective pressures) over longer timescales.} 

\subsection{Strong-selection weak-mutation, fitness graphs, \& local equilibria}

Wright's original formulation considered a population of sexually reproducing organisms distributed over the landscape. Modern treatments, however, usually concentrate on asexual populations. In an asexual population with sufficiently low mutations, the probability of a double (or more) mutation during reproduction is so low that it can be ignored and only point mutations matter. In this setting, an organism's offspring will be in the same or an adjacent vertex as its parent. In a population with mutation rate $u$ and effective population size $M$, if $M \log M = o(1/u)$ then a beneficial mutation goes to fixation in the population before the next one can occur -- the population remains monomorphic at most times and can be treated as a single point on the fitness landscape~\cite{G83,G84}. This assumption is known as \emph{strong-selection weak-mutation (SSWM)} and allows us to coarse-grain our scales and model evolutionary dynamics as steps from a vertex of lower fitness to an adjacent vertex of higher fitness.\footnote{This hill climbing view of evolution has been very useful for proving mathematical results, but has also unfortunately led to a view of evolution as an optimization procedure and corresponding teleological conclusions.}

\begin{myDef}
	In a fitness landscape with fitness $f$, a path $v_1...v_t$ is called
	\emph{adaptive} if each $v_{i + 1}$ differs from $v_{i}$ by one bit and $f(v_{i + 1}) \geq f(v_{i})$.
\end{myDef}

\begin{myDef}
	\emph{Strong-selection weak-mutation (SSWM) dynamics}\footnote{In the simplex algorithm literature, this definition is equivalent to the \emph{random edge} pivot rule. We will use this observation in section~\ref{sec:simplex}} extends an adaptive path $v_1...v_t$ that has not reached a local fitness maximum, by sampling uniformly at random a vertex adjacent to $v_t$ of higher fitness.
\end{myDef}

In the SSWM model, the exact quantitative fitness differences between adjacent genotypes does not matter, only if the difference is positive or negative. As such, we can replace the fitness function by a flow on the graph: for adjacent vertices, direct the edges from the lower to the higher fitness genotype. Assuming that no two adjacent genotypes have exactly the same fitness this results in a characterization of fitness landscapes as directed acyclic graphs on $\{0,1\}^n$. Crona, Greene, \& Barlow~\cite{CGB13} introduced this representation into theoretical biology as \emph{fitness graphs}, but they have been used previously in empirical studies of fitness landscapes~\cite{dVPK09,FKdVK11,Getal13,SSFKdV13}. This approach is particularly useful empirically because it is difficult to quantitatively compare fitnesses across experiments. In theoretic work, the fitness graph approach has made the proofs of some classical theorems relating local structure to global properties easier.

An adaptive walk will be at evolutionary equilibrium only if it reaches a local or global peak (a sink in the flow model). The fact that (in finite fitness landscapes) it eventually has to reach a peak has been taken by biologists as justification for assuming that the population is already at a peak. In fact, from the earliest days to today, investigations in biology had the form~\cite{W32}:

\begin{quote}
In a rugged field of this character selection will easily carry the species to the nearest peak, but there may be innumerable other peaks which are higher but which are separated by ``valleys." The problem of evolution as I see it is that of a mechanism by which the species may continually find its way from lower to higher peaks in such a field.
\end{quote}

Note that in the above passage, Wright implicitly assumes that a population starting away from equilibrium will get to a peak in a reasonable amount of time.
%%%NOTE: Add some more examples of recent work looking at walking between fitness peaks

\subsection{Epistasis}
\label{sec:epistasis}

\emph{Epistasis} is the amount of inter-loci interactions and is usually considered for the interaction of just two loci at a time. For two loci, there are 3 types of epistasis: magnitude, sign, and reciprocal sign. Consider two loci with the first having alleles a and A, and the second b and B. Assume that the upper-case combination is more fit $f(ab) < f(AB)$. If there is \emph{no epistasis} then the fitness effects are additive and independent of background: $f(AB) - f(aB) = f(Ab) - f(ab)$, $f(AB) - f(Ab) = f(aB) - f(ab)$. In \emph{magnitude epistasis} this additivity is broken, but the signs remain: $f(AB) > f(aB) > f(ab)$ and $f(AB) > f(Ab) > f(ab)$.~\footnote{Fitness graphs do not distinguish between no epistasis and magnitude epistasis~\cite{CGB13}, and neither will I when dealing with SSWM dynamics.}

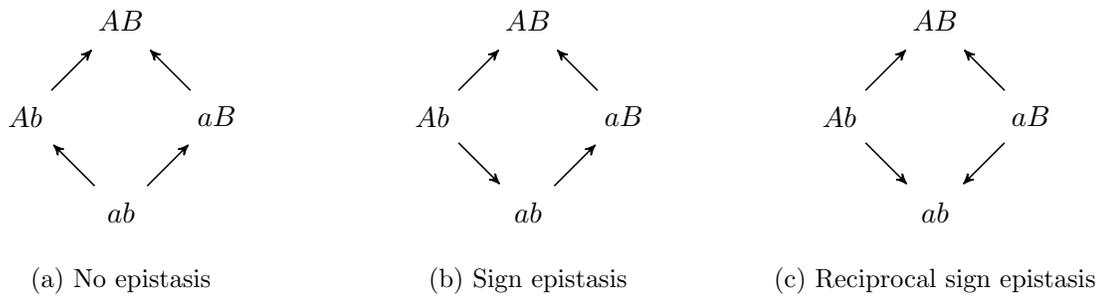
\begin{figure}
\centering
\begin{subfigure}[b]{0.32\textwidth}
\centering
\begin{tikzpicture}[->,>=stealth',shorten >=1pt,auto,node distance=1.8cm,
                    semithick]
  \tikzstyle{every state}=[fill=white,draw=white,text=black]

  \node[state](ab){$ab$};
  \node[state](Ab)[above left of=ab]{$Ab$};
  \node[state](aB)[above right of=ab]{$aB$};
  \node[state](AB)[above right of=Ab]{$AB$};

  \path (ab) edge              node {} (Ab)
            edge node {} (aB)
        (Ab) edge node {} (AB)
        (aB) edge node {} (AB)
            ;
\end{tikzpicture}
\caption{No epistasis}
\end{subfigure} \begin{subfigure}[b]{0.32\textwidth}
\centering
\begin{tikzpicture}[->,>=stealth',shorten >=1pt,auto,node distance=1.8cm,
                    semithick]
  \tikzstyle{every state}=[fill=white,draw=white,text=black]

  \node[state](ab){$ab$};
  \node[state](Ab)[above left of=ab]{$Ab$};
  \node[state](aB)[above right of=ab]{$aB$};
  \node[state](AB)[above right of=Ab]{$AB$};

  \path (ab) edge node {} (aB)
        (Ab) edge node {} (AB)
        	edge node {} (ab)
        (aB) edge node {} (AB)
            ;
\end{tikzpicture}
\caption{Sign epistasis}
\end{subfigure} \begin{subfigure}[b]{0.32\textwidth}
\centering
\begin{tikzpicture}[->,>=stealth',shorten >=1pt,auto,node distance=1.8cm,
                    semithick]
  \tikzstyle{every state}=[fill=white,draw=white,text=black]

  \node[state](ab){$ab$};
  \node[state](Ab)[above left of=ab]{$Ab$};
  \node[state](aB)[above right of=ab]{$aB$};
  \node[state](AB)[above right of=Ab]{$AB$};

  \path 
        (Ab) edge node {} (AB)
        	edge node {} (ab)
        (aB) edge node {} (AB)
        	edge node {} (ab)
            ;
\end{tikzpicture}
\caption{Reciprocal sign epistasis}
\label{fig:rse}
\end{subfigure}
\caption{Three different kinds of epistasis possible in fitness graphs.}
\label{fig:allepistasis}
\end{figure}

A system has \emph{sign epistasis} if it violates one of the two conditions for magnitude epistasis. For example, if $f(AB) > f(aB) > f(ab) > f(Ab)$ then there is sign epistasis at the first locus. If the second locus is b then the mutation from a to A is not adaptive, but if the second locus is B then the mutation from a to A is adaptive. 

\begin{prop}[\cite{WWC05,CGB13}]
If there is no sign epistasis in a fitness landscape then it is called a \emph{smooth landscape} or \emph{Mt. Fuji landscape} and has a single peak $x^*$. Every shortest path (ignoring edge directions) from an arbitrary $x$ to $x^*$ is an adaptive path, and vice-versa.
\label{prop:smooth}
\end{prop}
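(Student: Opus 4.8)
The plan is to first upgrade the \emph{local} (pairwise) hypothesis ``no sign epistasis'' into a \emph{global} statement: each locus has a single favored allele whose benefit is independent of the genetic background. Fix a locus $i$ and, for each assignment $b \in \{0,1\}^{n-1}$ of the remaining loci, write $\Delta_i(b) = f(b,1) - f(b,0)$ for the fitness effect of setting bit $i$ to $1$ against background $b$. No sign epistasis between loci $i$ and $j$ says precisely that $\Delta_i(b)$ and $\Delta_i(b')$ have the same sign whenever $b,b'$ differ only in coordinate $j$. Since the set of backgrounds is itself a hypercube $\{0,1\}^{n-1}$ and is therefore connected by single-bit moves, the sign of $\Delta_i$ is constant over all $b$; invoking the standing assumption that no two adjacent genotypes are tied, $\Delta_i(b) \neq 0$ for every $b$, so this sign is well defined. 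Let $x^*_i$ be the favored allele ($x^*_i = 1$ if $\Delta_i > 0$, else $0$), and set $x^* = (x^*_1,\dots,x^*_n)$.

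Next I would show $x^*$ is the unique peak. Every neighbor of $x^*$ differs from it in one coordinate $i$ and hence carries the disfavored allele there, so by construction $f(x^*)$ exceeds the fitness of that neighbor; thus $x^*$ is a local (and the global) maximum. Conversely any $x \neq x^*$ disagrees with $x^*$ in some coordinate $i$, and flipping bit $i$ toward $x^*_i$ strictly raises fitness, so $x$ has a fitter neighbor and cannot be a peak. Hence $x^*$ is the only local maximum.

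For the two path claims, I would argue as follows. A shortest path (ignoring directions) from $x$ to $x^*$ in the hypercube is exactly a sequence flipping, one coordinate at a time in some order, each of the $d(x,x^*)$ coordinates on which $x$ and $x^*$ disagree; every such flip replaces a disfavored allele by the favored one and hence strictly increases $f$, so the path is adaptive. Conversely, let $v_1\cdots v_t$ be an adaptive path with $v_1 = x$ and $v_t = x^*$. By the no-ties assumption every single-bit move strictly changes fitness, so adaptiveness forces each step to flip some coordinate toward its favored allele; then no coordinate is flipped twice (a second flip would move away from the favored allele and lower $f$), the flipped coordinates are precisely those on which $x$ and $x^*$ disagree, and $t - 1 = d(x,x^*)$. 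Hence the adaptive path is a shortest path, and the two families of paths coincide.

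The main obstacle is the first paragraph: converting the purely local no-sign-epistasis hypothesis --- a statement about individual $2\times 2$ squares --- into the global background-independence of each locus's favored allele. Once the connectivity-of-backgrounds argument delivers that, the remaining claims are immediate book-keeping, with the only care needed being the no-ties assumption that makes ``adaptive'' mean strictly uphill and thereby pins down the path lengths.
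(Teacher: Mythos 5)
Your proof is correct, but there is nothing in the paper to compare it against: the proposition is stated as a known result, with its proof deferred to the cited references \cite{WWC05,CGB13}, so the paper offers no internal argument. Judged on its own, your write-up is sound and is essentially the standard argument from that literature. The crux is exactly the step you flag: the hypothesis is pairwise (no sign epistasis for any pair of loci, all other loci held fixed), and you globalize it by noting that the background space $\{0,1\}^{n-1}$ is connected under single-bit flips while each such flip preserves the sign of $\Delta_i$; hence each locus has a background-independent favored allele and $x^*$ is forced. The remaining bookkeeping --- every neighbor of $x^*$ is strictly less fit, every $x \neq x^*$ has a strictly fitter neighbor, a shortest path flips only disagreeing coordinates so each step is strictly uphill, and an adaptive path can never flip a coordinate away from its favored allele and therefore never flips any coordinate twice --- is handled correctly. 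You are also right that the paper's standing assumption that adjacent genotypes are never tied (introduced where fitness graphs are defined) is what turns the non-strict inequality in the definition of an adaptive path into strict ascent, which both makes the sign of $\Delta_i$ well defined and pins the adaptive path length to $d(x,x^*)$.
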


An example of an empirical smooth landscape can be seen in figure~\ref{fig:empsmooth}.

\begin{cor}
Evolution can quickly find the global optimum in a \emph{smooth landscape} on $\{0,1\}^n$ with any adaptive path taking at most $n$ steps.
\label{cor:fast}
\end{cor}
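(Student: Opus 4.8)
The plan is to read the statement off almost directly from Proposition~\ref{prop:smooth}, which supplies every structural fact needed. First I would fix an arbitrary starting genotype $x \in \{0,1\}^n$ and consider any adaptive path $v_1 v_2 \dots v_t$ with $v_1 = x$ that has been extended as far as possible (i.e.\ run SSWM dynamics, or any adaptive-path-following dynamics, until no fitness-increasing neighbour of the current vertex remains). Because the landscape is smooth, Proposition~\ref{prop:smooth} guarantees that the only local peak is the global optimum $x^*$. Hence a maximal adaptive path can terminate nowhere except at $x^*$, so $v_t = x^*$ and the dynamics do in fact locate the global optimum rather than stalling at a spurious local peak.

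Next I would bound the number of steps. Proposition~\ref{prop:smooth} also asserts that every adaptive path is a shortest path (ignoring edge directions) from its origin to $x^*$. In the hypercube $\{0,1\}^n$ a shortest path from $x$ to $x^*$ consists of flipping, one at a time and each exactly once, precisely the bits on which $x$ and $x^*$ disagree; its length is therefore the Hamming distance $d(x,x^*)$. Since $x$ and $x^*$ are $n$-bit strings, $d(x,x^*) \le n$, so the path uses $t - 1 = d(x,x^*) \le n$ steps. As $x$ was arbitrary, every adaptive path in a smooth landscape has length at most $n$, which is the advertised linear-time convergence.

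There is essentially no hard step here, since the corollary is an immediate consequence of Proposition~\ref{prop:smooth}. The only point that warrants explicit mention is the claim that a maximal adaptive walk cannot get stuck short of $x^*$; this is exactly the single-peak half of the proposition, which forbids any local optimum at which the walk could halt prematurely. Once that is in hand, identifying adaptive paths with hypercube geodesics reduces the length bound to the elementary observation that geodesics in $\{0,1\}^n$ have length at most $n$.
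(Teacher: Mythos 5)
Your proposal is correct and matches the paper's intent exactly: the paper states this corollary without an explicit proof, treating it as an immediate consequence of Proposition~\ref{prop:smooth}, and your argument (maximal adaptive walks can only terminate at the unique peak $x^*$, and adaptive paths coincide with hypercube geodesics, hence have length $d(x,x^*) \le n$) is precisely the reasoning the paper leaves implicit. Nothing is missing; the care you take with walks that might ``stall'' short of $x^*$ is exactly the right point to make explicit.
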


\begin{figure}
\centering
\begin{tikzpicture}[->,>=stealth',shorten >=1pt,auto,node distance=1.8cm,
                    semithick]
  \tikzstyle{every state}=[fill=white,draw=white,text=black]

  \node[state](abc){$abc$};
  \node[state](Abc)[above left of=abc]{$Abc$};
  \node[state](aBc)[above right of=abc]{$aBc$};
  \node[state](ABc)[above right of=Abc]{$ABc$};
  \node[state](AbC)[above right of=aBc]{$AbC$};
  \node[state](abC)[below right of=AbC]{$abC$};
  \node[state](aBC)[above right of=abC]{$aBC$};
  \node[state](ABC)[above right of=AbC]{$ABC$};

  \path 
  		(abc) edge[color=green] node {} (abC)
        (Abc) edge[color=red] node {} (ABc)
        	edge[color=red] node {} (abc)
            edge node {} (AbC)
        (aBc) edge[color=red] node {} (ABc)
        	edge[color=red] node {} (abc)
            edge node {} (aBC)
        (abC) edge              node {} (AbC)
            edge[color=green] node {} (aBC)
        (AbC) edge node {} (ABC)
        (aBC) edge[color=green] node {} (ABC)
        (ABC) edge[color=green] node {} (ABc)
            ;
\end{tikzpicture}
\caption{Note that the shortest paths from abc to ABc are blocked by the reciprocal sign epistasis of the red edges. However, an alternative adaptive path exists along the green edges that first introduces the C allele to reach ABC, but then removes it to return to ABc.}
\label{fig:gettingAround}
\end{figure}
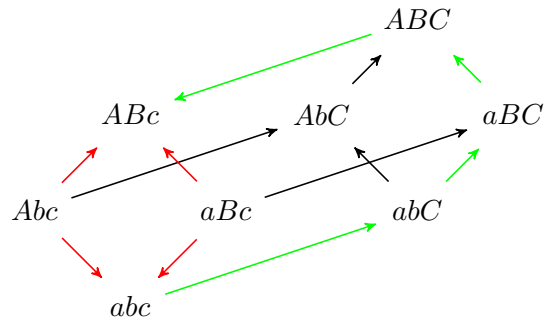

In contrast, ruggedness -- defined as having multiple peaks -- is a global property that is only weakly related to epistasis. An example of an empirical rugged landscape is given in figure~\ref{fig:emprough}. A system has \emph{reciprocal sign epistasis} if both conditions of magnitude epistasis are broken, or (equivalently) if we have sign epistasis on both loci~\cite{PKWT07}. An example of negative reciprocal epistasis would be if $f(AB) > f(ab)$ but $f(ab) > f(Ab)$ and $f(ab) > f(aB)$. The presence of reciprocal sign epistasis is a necessary condition for multiple peaks~\cite{PKWT07,PSKT11}. However, it is not sufficient for multiple peaks, since evolution can use a third locus to go around the fitness valley as shown in figure~\ref{fig:gettingAround}. In fact, there is no local property in terms of just reciprocal sign epistasis that is sufficient for the existence of multiple-peaks~\cite{CGB13}.\footnote{A sufficient condition can be given in terms of reciprocal and single sign epistasis: if there is reciprocal sign epistasis but no pair of loci with just a single sign epistasis (i.e. sign epistasis on only one of the two loci, as given in the example of the previous paragraph)~\cite{CGB13}} An open problem is if there is any local property or polynomial time testable property that is both necessary and sufficient for multiple-peaks.

\subsection{$NK$ model of rugged fitness landscapes}
\label{sec:NK}

\begin{myDef}[\cite{KW87,KW89,K93}]
	The \emph{$NK$ model} is a fitness landscape on $\{0,1\}^n$. The $n$ loci are arranged in a gene-interaction network where each locus $x_i$ is linked to $K$ other loci $x^i_1,...,x^i_k$ and has an associated fitness contribution function $f_i: \{0,1\}^{K + 1} \rightarrow \mathbb{R_+}$
	%\footnote{Sometimes we will write $f_{x_i}: \{0,1\}^K \rightarrow \mathbb{R}$ with $f_{x_i}(y) = f_i(x_iy)$ where $x_i$ is used as the name of locus in the subscript, and as the value of the locus in the function.}.
	Given a vertex $v \in \{0,1\}^n$, we define the fitness $f(x) = \sum_{i = 1}^n f_i(x_ix^i_1...x^i_k)$.
\end{myDef}

By varying $K$ we can control the amount of epistasis in the landscape. With $K = 0$ we have a smooth landscape, and for higher $K$ we can treat the fitness contributions as generalizations of the two loci epistasis described in the previous section. The model also provides an upper bound of $n{K + 1 \choose 2}$ on the number of gene pairs that have epistatic interactions.

The $NK$ model is frequently studied through simulation, or statistical mechanics approaches. In a typical biological treatment, the gene-interaction network is assumed to be something simple like a generalized cycle (where $x_i$ is linked to $x_{i + 1},...x_{i + K}$) or a random $K$-regular graph. The fitness contributions $f_i$ are usually sampled from some choice of distribution. As such, we can think of biologists as doing average case analysis of the fitness landscapes. Since there is -- as we will see in section~\ref{sec:empirical} -- no empirical or theoretically sound justification for the choice of distributions, I eliminate them and focus on worst-case analysis. 

Weinberger~\cite{W96} showed that checking if the global optimum in an $NK$ model is greater than some input value $V$ is $NP$-complete for $K \geq 3$. Although this implies that finding a global optimum is difficult, it says nothing about local optima. As such, it has generated little interest among biologists, although it spurred interest as a model in the evolutionary algorithms literature, leading to a refined proof of $NP$-completeness for $K \geq 2$~\cite{WTZ00}.

\subsection{Empirical fitness landscapes and speed of adaptation}
\label{sec:empirical}

\begin{figure}
\centering
\begin{subfigure}[b]{0.49\textwidth}
\centering
\begin{tikzpicture}[->,>=stealth',shorten >=1pt,auto,node distance=2.8cm,
                    semithick]
  \tikzstyle{every state}=[fill=white,draw=white,text=black,scale=0.5]

  \node[state](0000)[text=blue]{$0000$};
  \node[state](0100)[above left of=0000,text=blue]{$0100$};
  \node[state](0010)[above right of=0000,text=blue]{$0010$};
  \node[state](1000)[left of=0100,text=blue]{$1000$};
  \node[state](0001)[right of=0010,text=blue]{$0001$};
  \node[state](1100)[above left of=1000,text=blue]{$1100$};
  \node[state](1010)[right of=1100,text=blue]{$1010$};
  \node[state](0110)[right of=1010,text=blue]{$0110$};
  \node[state](0011)[above right of=0001,text=blue]{$0011$};
  \node[state](0101)[left of=0011,text=blue]{$0101$};
  \node[state](1001)[left of=0101,text=blue]{$1001$};
  \node[state](1110)[above right of=1100,text=blue]{$1110$};
  \node[state](1101)[right of=1110,text=blue]{$1101$};
  \node[state](0111)[above left of=0011,text=blue]{$0111$};
  \node[state](1011)[left of=0111,text=blue]{$1011$};
  \node[state](1111)[above right of=1101,text=blue]{$\underline{1111}$};
  
  \path
  (0000) edge node {} (1000)
  	edge node {} (0100)
    edge node {} (0010)
    edge node {} (0001)
  (1000) edge node {} (1100)
  	edge node {} (1010)
    edge node {} (1001)
  (0100) edge node {} (1100)
  	edge node {} (0110)
    edge node {} (0101)
  (0010) edge node {} (1010)
  	edge node {} (0110)
    edge node {} (0011)
  (0001) edge node {} (1001)
  	edge node {} (0101)
    edge node {} (0011)
  (1100) edge node {} (1110)
  	edge node {} (1101)
  (1010) edge node {} (1110)
  	edge node {} (1011)
  (1001) edge node {} (1101)
  	edge node {} (1011)
  (0110) edge node {} (1110)
  	edge node {} (0111)
  (0101) edge node {} (1101)
  	edge node {} (0111)
  (0011) edge node {} (1011)
  	edge node {} (0111)
  (1110) edge node {} (1111)
  (1011) edge node {} (1111)
  (1101) edge node {} (1111)
  (0111) edge node {} (1111)
	;

\end{tikzpicture}
\caption{\emph{Escherichia coli} $\beta$-lactamase}
\label{fig:empsmooth}
\end{subfigure} \begin{subfigure}[b]{0.49\textwidth}
\centering
\begin{tikzpicture}[->,>=stealth',shorten >=1pt,auto,node distance=2.8cm,
                    semithick]
  \tikzstyle{every state}=[fill=white,draw=white,text=black,scale=0.5]

  \node[state](0000)[text=green]{$0000$};
  \node[state](0100)[above left of=0000,text=green]{$0100$};
  \node[state](0010)[above right of=0000,text=green]{$0010$};
  \node[state](1000)[left of=0100,text=blue]{$1000$};
  \node[state](0001)[right of=0010,text=red]{$0001$};
  \node[state](1100)[above left of=1000,text=blue]{$\underline{1100}$};
  \node[state](1010)[right of=1100,text=blue]{$1010$};
  \node[state](0110)[right of=1010,text=green]{$0110$};
  \node[state](0011)[above right of=0001,text=red]{$0011$};
  \node[state](0101)[left of=0011,text=red]{$\underline{0101}$};
  \node[state](1001)[left of=0101,text=green]{$1001$};
  \node[state](1110)[above right of=1100,text=blue]{$1110$};
  \node[state](1101)[right of=1110,text=green]{$1101$};
  \node[state](0111)[above left of=0011,text=red]{$0111$};
  \node[state](1011)[left of=0111,text=green]{$1011$};
  \node[state](1111)[above right of=1101,text=green]{$1111$};
  
  \path
  (0000) edge node {} (1000)
  	edge node {} (0100)
    edge node {} (0010)
    %edge node {} (0001)
  (1000) edge node {} (1100)
  	%edge node {} (1010)
    edge node {} (1001)
  (0100) edge node {} (1100)
  	%edge node {} (0110)
    edge node {} (0101)
  (0010) edge node {} (1010)
  	edge node {} (0110)
    %edge node {} (0011)
  (0001) edge node {} (1001)
  	edge node {} (0101)
    %edge node {} (0011)
    edge node {} (0000)
  %(1100) edge node {} (1110)
  %	edge node {} (1101)
  (1010) edge node {} (1110)
  	edge node {} (1011)
    edge node {} (1000)
  (0110) edge node {} (1110)
  	edge node {} (0111)
    edge node {} (0100)
  (1001) edge node {} (1101)
  	%edge node {} (1011)
  %(0101) edge node {} (1101)
  %	edge node {} (0111)
  (0011) edge node {} (1011)
  	edge node {} (0111)
    edge node {} (0100)
    edge node {} (0010)
    edge node {} (0001)
  (1110) %edge node {} (1111)
  	edge node {} (1100)
  (1011) edge node {} (1111)
  	edge node {} (1001)
  (1101) %edge node {} (1111)
  	edge node {} (1100)
    edge node {} (0101)
  (0111) %edge node {} (1111)
  	edge node {} (0101)
  (1111) edge node {} (0111)
  	edge node {} (1101)
    edge node {} (1110)
	;
\end{tikzpicture}
\caption{\emph{Plasmodium falciparum} dihydrofolate reductase}
\label{fig:emprough}
\end{subfigure}
\caption{Two examples of empirical fitness landscape from Figure 1 of~\cite{SSFKdV13}. Figure~\ref{fig:empsmooth} is based on the data of Chou et al.~\cite{Cetal11} and contains a single optimum (1111) and is a smooth landscape with no sign epistasis. Figure~\ref{fig:emprough} is based on data from Lozovsky et al.~\cite{Letal09} and has both single sign and reciprocal sign epistasis and two peaks. The first peak is 1100 and its basin of attraction is shown in blue, the second is 0101 with a red basin of attraction; vertices in green have equal length shortest path to each fitness peak.}
\label{fig:empirical}
\end{figure}
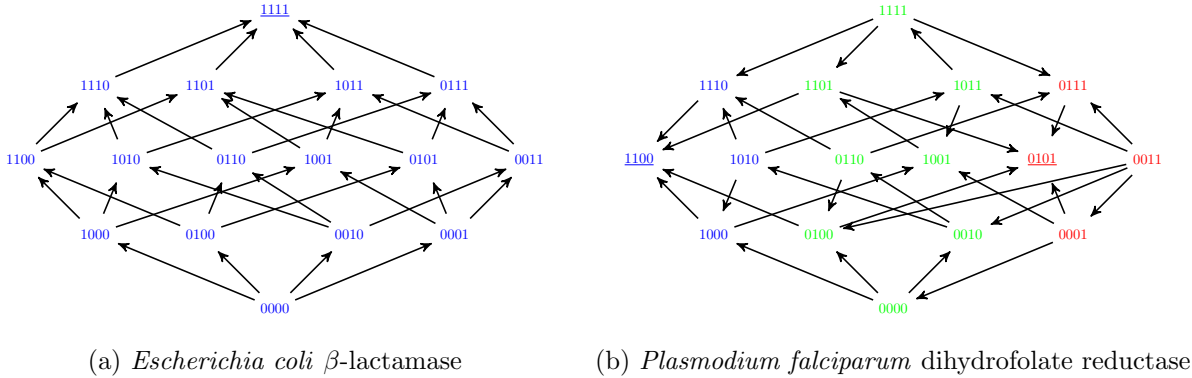

No local property is known for ruggedness, and this makes empirical tests extremely difficult~\cite{WPMT95,KTP09}. In particular, most experimental results do not measure the fitness landscape, but only report the average fitness versus time and average number of acquired adaptations versus time~\cite{LT94,CL00,Betal09,KTP09}. Szendro et al.~\cite{SSFKdV13} surveyed the few recent experiments that conducted a methodical examination of all mutations in a subset of loci of model organisms, but most studies (6 out of 12; two are presented in figure~\ref{fig:empirical}) were able to empirically realize only small fitness landscapes of just 4 to 5 loci, with the largest full fitness landscape having length 6~\cite{HAP10}, and the largest number of vertices in a single study being 418 out of the possible 512 in a length 9 landscape~\cite{OMetal08}. 
%Szendro et al. (2013) compared these landscapes to popular theoretical models along many measures. To get decent statistics and because it is unknown how many of the measures scale with landscape size and the smallest examples had size 4, the authors restrained themselves to looking at for loci sub-samples of the empirical data. From the models they considered, the rough Mt. Fuji model (standard Mt. Fuji with Gaussian noise added to individual vertices) fit their data best. Unfortunately, these fitness landscapes are unreasonably small to be useful for distinguishing qualitative dynamics of interest to theoretical computer science. In particular, with $n = 4$ we can't even tell the difference between quadratic time (usually associated with random walks) and exponential time (as associated with exhaustive search), so in terms of search time with such small landscapes we can't even distinguish the two extremes or ordered and unordered search discussed in the last entry. 
Unfortunately, a four loci landscape is simply too local of a property, and not much more informative than the reductionist two loci analysis of epistasis. However, the biological intuition is that real landscapes are a little rough, and have multiple optima but not as frequent as completely uncorrelated models. Although, we know almost nothing about the structure of fitness landscapes, theoretical biologists continue to forge ahead, making arbitrary assumptions about the structure and statistical properties of their models. There is a disconnect between theory and data~\cite{O05,KTP09}.

Is the assumption of local equilibrium empirically reasonable? From a genome-wide perspective, it seems to be at odds with the intuition of naturalists. Consider vestigial features of your own body like your appendix, goose bumps, tonsils, wisdom teeth, third eyelid, or the second joint in the middle of your foot made immobile by a tightened ligament. Would it not be more efficient (and thus produce marginally higher fitness) if you did not spend the energy to construct these features? Of course, this naturalist argument is not convincing since we don not know if there are any small mutations that could remove these vestigial features from our development, I could just be describing a different local optimum that lays on the other side of a fitness valley from my current vertex.\footnote{This example is further complicated because the concept of equilibrium is different for sexual organisms that are capable of recombination and often does not correspond to something as simple as a peak in the fitness landscape~\cite{LPDF08}}. The other tempting naturalist example is macroevolutionary change like speciation. Unfortunately, on such long timescales the environment is not constant and depends on the extant organisms through mechanisms like niche-construction or frequency-dependence.\footnote{This defense of evolutionary equilibria is a central part of the punctuated equilibrium theory of evolution; the environment changes and the wild-type becomes not locally optimal. Adaptation is assumed to quickly carry the species to a nearby local optimum where it remains for a long period of time until the next environmental change.}

One of the earliest successes for mathematical models of rugged landscapes was an application to affinity maturation in immune response~\cite{KW89}. The length of evolutionary process leading to affinity maturation is very short, typically adaptation stops after only 6-8 nucleotide changes~\cite{CGHCH81,T83,Cetal85} -- an adaptive process that happens on the order of days. However, the results should be taken with a grain of salt, the adapted B-cells were not experimentally isolated and all of their point-mutations were not checked to guarantee that a fitness peak was reached. In both theoretical and experimental treatments of evolution, it is known that fitness increases tend to show a pattern of geometrically diminishing returns~\cite{LT94,O98,CL00,KTP09} which means that after a few generations the fitness change will be so small that the fixation time will be longer than the presence of the pathogen causing the immune response. Further, the activation-induced (DNA-cytosine) deaminase enzyme (and other mechanisms) increases the rate of mutation by a factor of $10^6$ along the gene encoding antibody proteins, suggesting that this is a fitness landscape that has been shaped by evolution of the human immune system to find fit mutants as quickly as possible. This creates a bias toward landscapes where local maxima would be easier to find than usual, and thus makes affinity maturation a poor candidate for considering the speed of typical evolution.

A more typical setting might be the evolution of \emph{E. coli} in a static fitness landscape. Here, biologists have run long-term experiments tracking a population for over 50,000 generations~\cite{LT94,CL00,BBDL12} and continue to find adaptations and marginal increases in fitness. This suggests that a local optimum is not quickly found, even though the environment is static. However, it is difficult to estimate the number of adaptive mutations that fixed in this population, and Lenski~\cite{L03} estimated that as few as 100 adaptive point-mutations fixated in the first 20,000 generations. It is also hard to argue that the population doesn't traverse small fitness valleys between measurements, which could be used to suggest that the colony is hopping from one easy-to-find local equilibrium to the next.

\section{$NK$ model with $K \geq 2$ is PLS-complete}
\label{sec:PLS}

To understand the difficulty of finding items with some local property like being an equilibrium, Johnson, Papadimitrio \& Yannakakis~\cite{PLS} defined the complexity class of polynomial local search (PLS). A problem is in PLS if it can be specified by three polynomial time algorithms~\cite{R09}:

\begin{enumerate}
\item An algorithm $I$ that accepts an instance (like a description of a fitness landscape) and outputs a first candidate to consider (the wild type).
\item An algorithm $F$ that accepts an instance and a candidate and returns a objective function value (computes the fitness).
\item An algorithm $M$ that accepts an instance and a candidate and returns an output with a strictly higher objective function value, or says that the candidate is a local maximum.
\end{enumerate}

We consider a PLS problem solved if an algorithm can output a locally optimal solution for every instance.\footnote{This algorithm does not necessarily have to use $I$, $F$, or $M$ or follow adaptive paths. For instance, it can try to uncover hidden structure from the description of the landscape. A classical example would be the ellipsoid method for linear programming.} The hardest problems in PLS -- i.e. ones for which a polynomial time solution could be converted to a solution for any other PLS problem -- are called PLS-complete. It is believed that PLS-complete problems are not solvable in polynomial time, but -- much like the famous $\mathsf{P} \neq \mathsf{NP}$ question -- this conjecture remains open. Note that finding local optima on fitness landscapes is an example of a PLS problem, where $I$ is your method for choosing the wild type, $F$ is the fitness function, and $M$ is an adaptive step.

\begin{myDef}[Weighted 2SAT]
Consider $n$ variables $x = x_1...x_n \in \{0,1\}^n$ and $m$ clauses $C_1,...,C_m$ and associated positive integer weights $c_1,...c_m$. Each clause $C_k$ contains two literals (a literal is a variable $x_i$ or its negation $\bar{x_i}$), and contributes $c_k$ to the fitness if at least one of the literals is satisfied, and nothing if neither literal is satisfied. The total fitness $c(x)$ is the sum of the individual contributions of the $m$ clauses. Two instances $x$ and $x'$ are adjacent if there is exactly one index $i$ such that $x_i = x'_i$. We want to maximize fitness.
\end{myDef}

The Weighted 2SAT problem is PLS-complete~\cite{SY91}. To show that the $NK$ model is also PLS-complete, I will show how to reduce any instance of Weighted 2SAT to an instance of the $NK$ model.

\begin{thm}
	Finding a local optimum in the NK fitness landscape is $PLS$-complete.
	\label{thm:main}
\end{thm}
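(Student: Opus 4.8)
The plan is to verify membership in PLS (which is essentially immediate) and then exhibit a PLS-reduction from Weighted 2SAT. Recall that a PLS-reduction consists of a polynomial-time map $\Phi$ on instances together with a polynomial-time map $\Psi$ sending solutions of $\Phi(\phi)$ back to solutions of $\phi$, such that local optima are carried to local optima. For membership, an instance is the interaction network together with the tables of the $f_i$ (each of size $2^{K+1}$, polynomial for fixed $K$); take $I$ to output $0^n$, let $F$ compute $f(x)=\sum_i f_i(x_i x^i_1\dots x^i_K)$, and let $M$ scan the $n$ neighbours of a candidate for a strictly fitter one or else declare a local maximum. All three run in polynomial time, so finding a local optimum in the $NK$ model lies in PLS.

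For hardness I would keep the ground set fixed at $\{0,1\}^n$ and try to realize the Weighted 2SAT objective \emph{as} the $NK$ fitness, so that $\Phi$ and $\Psi$ are the identity on coordinates. The key observation is that each clause $C_k$ is a function of only its two variables, and hence its weighted contribution $c_k\,[C_k\text{ satisfied}]$ is expressible by a single fitness-contribution function of $K+1=3$ loci. So the scheme is to assign each clause to an ``anchor'' locus $i$ and to choose the two neighbours $x^i_1,x^i_2$ of $i$ so that every clause anchored at $i$ involves only the loci $\{i,x^i_1,x^i_2\}$; then $f_i$ can be set to the sum of those clause contributions, $f\equiv c$ holds identically, and $\Phi,\Psi$ give a genuine bijection between the two landscapes.

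The step I expect to be the main obstacle is precisely this anchoring, because the degree constraint bites: each $f_i$ sees only $K+1=3$ loci, so it can host clauses spanning at most two partner variables, and the $n$ contribution functions can therefore host only $O(n)$ distinct variable-pairs, whereas a general Weighted 2SAT instance may have $\Theta(n^2)$ clauses. I would resolve this by first passing to \emph{bounded-occurrence} Weighted 2SAT (each variable in $O(1)$ clauses, which I expect remains PLS-complete): the clause graph then has bounded degree and admits an orientation with out-degree at most $K=2$, so orienting each clause toward a ``tail'' variable gives the anchor and its at-most-two neighbours (padding with dummy links when $K>2$). As a self-contained alternative, one can instead introduce one auxiliary ``clause locus'' $y_k$ per clause, linked to the two variables of $C_k$ and carrying $C_k$'s contribution, while the variable loci carry constant contributions; here one must be careful that, since the definition of an \emph{adaptive} path permits equal-fitness steps, the $y_k$ are given strictly disimproving off-values so that they are pinned rather than producing neutral moves, and the essential dynamics remain a bijective copy of the 2SAT walk.

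Once $f\equiv c$ on the shared coordinates, the payoff is automatic: the fitness graph (flow) of the constructed $NK$ instance coincides with that of the 2SAT instance, so local optima correspond under $\Psi$, establishing that the $NK$ model is PLS-complete for $K\ge 2$. Moreover, because improving/adaptive paths coincide edge-for-edge, the exponentially long improving paths known for the Weighted 2SAT construction of~\cite{SY91} transfer directly, yielding $NK$ landscapes with $K=2$ on which every adaptive path from certain vertices has exponential length. I therefore anticipate that essentially all of the technical work sits in the packing/degree-reduction step — securing bounded-occurrence PLS-completeness (or, in the auxiliary-locus variant, pinning the clause loci without introducing equal-fitness moves) — while the remaining verifications are routine.
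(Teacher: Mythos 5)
Your ``self-contained alternative'' is, almost verbatim, the paper's own proof: the paper introduces one auxiliary clause locus $b_k$ per clause, linked to the two variables of $C_k$ (so $K=2$), with $f_k(0x_ix_j)$ equal to $c_k$ if $C_k$ is satisfied and $0$ otherwise, and $f_k(1x_ix_j) = f_k(0x_ix_j) + 1$, while the variable loci get zero contributions. The strict $+1$ pins $b_k = 1$ in every local optimum, so the local optima of the $NK$ instance are exactly the pairs $(1\cdots1, x)$ with $x$ a local optimum of the Weighted 2SAT instance; on the subcube $b = 1\cdots1$ the two fitness graphs coincide, the solution map is the projection $(b,x) \mapsto x$, and the exponential-improving-path theorem of~\cite{SY91} transfers to adaptive walks. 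Your remark that the clause loci must be \emph{strictly} pinned (because adaptive paths allow equal-fitness steps) is exactly the role of the $+1$. So that branch of your proposal is correct and is the paper's route; making it the main argument finishes the proof.

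Your primary route, by contrast, has a genuine gap. By insisting that the reduction keep the ground set $\{0,1\}^n$ fixed and realize $c$ identically as $f$, you create the anchoring/degree problem, and your fix rests on the claim that bounded-occurrence Weighted 2SAT ``remains PLS-complete'' --- which you only \emph{expect}. That is precisely the missing step: it is not in~\cite{SY91}, and occurrence- or degree-reducing PLS reductions are themselves substantial results, not routine observations (moreover, even granting bounded occurrence, your orientation argument must bound the number of \emph{distinct} out-neighbours per variable by $2$, not merely the out-degree). The lesson of the paper's construction is that none of this machinery is needed: a PLS reduction is free to enlarge the instance --- it only requires polynomial-time maps on instances and on solutions carrying local optima to local optima --- so adding $m$ clause loci and projecting them away sidesteps the packing problem entirely. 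I would therefore drop the anchoring route (or retain it only as a remark that it would require a bounded-occurrence PLS-completeness theorem you do not have) and promote your auxiliary-locus construction to the proof proper.
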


For $K \leq 1$, even a global optimum can be found in polynomial time~\cite{WTZ00}, so the theorem is as strong as it can be.

\begin{proof}
	Consider an instance of Weighted 2SAT with variables $x_1,...,x_n$, clauses $C_1,...,C_m$ and positive integer costs $c_1,...,c_m$. We will build a landscape with $m + n$ loci, with the first $m$ labeled $b_1,...,b_m$ and the next $n$ labeled $x_1,...,x_m$. Each $b_k$ will correspond to a clause $C_k$ that uses the variables $x_i$ and $x_j$ (i.e., the first literal is either $x_i$ or $\bar{x_i}$ and the second is $x_j$ or $\bar{x_j}$; set $i < j$ to avoid ambiguity). Define the corresponding fitness effect of the locus as:
	\begin{align}
		f_{k}(0x_ix_j) & =  \begin{cases}
		c_k & \mathrm{if}\;C_k\;\mathrm{is}\;\mathrm{satisfied} \\
		0 & \mathrm{otherwise}
		\end{cases} \\
		f_{k}(1x_ix_j) & = f_{k}(0x_ix_j) + 1
	\end{align}
	Link the $x_i$ arbitrarily (say to $x_{(i \mod n) + 1}$ and $x_{(i + 1 \mod n) + 1}$, or to nothing at all) with a fitness effect of zero, regardless of the values.
	
	In any local maximum $b,x$, we have $b = 11..1$ and $f(x) = m + c(x)$. On the subcube with $b = 11..1$  Weighted 2SAT and this NK model have the same exact fitness graph structure, and so there is a bijection between their local maxima.
\end{proof}

Assuming -- as most computer scientists do -- that there exists some problem in PLS not solvable in polynomial time, then theorem~\ref{thm:main} implies that no matter what mechanistic rule evolution follows (even ones we have not discovered, yet), be it as simple as SSWM or as complicated as any polynomial time algorithm, there will be NK landscapes with $K = 2$ such that evolution will not be able to find a fitness peak efficiently. But if we focus only rules that follow the adaptive paths then we can strengthen the result:

\begin{cor}
	There is a constant $c > 0$ such that, for infinitely many $n$, there are instances of $NK$ models (with $K \geq 2$) on $\{0,1\}^n$ and initial vertices $v$ such that any adaptive path from $v$ will have to take at least $2^{cn}$ steps before finding a fitness peak.
\end{cor}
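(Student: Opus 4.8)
The plan is to exploit the fact, already visible in the proof of Theorem~\ref{thm:main}, that the reduction from Weighted 2SAT is not merely a bijection on local optima but a bijection on entire adaptive walks, and then to import a known exponential lower bound on the length of improving paths for Weighted 2SAT. So the corollary should follow by combining (i) a path-preservation property of the construction with (ii) the \emph{tightness} of the Sch\"affer--Yannakakis reduction~\cite{SY91}, rather than its mere completeness.

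First I would make the fitness decomposition in the construction explicit. Summing the contributions $f_k(b_k x_i x_j) = [C_k\ \mathrm{satisfied}]\cdot c_k + b_k$ over all clauses gives $f(b,x) = c(x) + \#\{k : b_k = 1\}$. Three facts about single-bit flips follow immediately: flipping some $b_k$ from $0$ to $1$ always raises fitness by exactly $1$; flipping it from $1$ to $0$ always lowers fitness by exactly $1$; and flipping some $x_i$ changes fitness by exactly the corresponding change $\Delta c(x)$ of the Weighted 2SAT objective, independent of the $b$-coordinates. I would use these to pin down the adaptive walks. Starting the walk at $v = (1\cdots1,\, x_0)$ (every $b_k = 1$), no $b$-flip raises fitness because all $b_k$ are already $1$, and flipping any $b_k$ back to $0$ strictly \emph{decreases} fitness and is therefore never adaptive. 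Hence every adaptive path from $v$ remains inside the subcube $b = 1\cdots1$ and consists solely of $x$-flips; on that subcube $f = m + c(x)$, so adaptive paths from $v$ are exactly the strictly improving paths of the Weighted 2SAT instance from $x_0$, and conversely. (To make the ``$\geq$'' in the definition of adaptive coincide with a strict increase -- ruling out plateau shortcuts that might create a short path -- I would apply a generic weight perturbation that preserves the ordering of neighbours, consistent with the paper's standing assumption that no two adjacent genotypes share a fitness.) Consequently the shortest adaptive path from $v$ to a peak has the same length as the shortest improving path from $x_0$ to a local optimum of the Weighted 2SAT instance, and since the minimum over all paths is the shortest path length, a lower bound on the latter is a lower bound on \emph{every} adaptive path from $v$.

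Finally I would invoke the tightness of the reduction establishing PLS-completeness of Weighted 2SAT~\cite{SY91}: it yields, for infinitely many instance sizes, instances together with a starting assignment $x_0$ such that every improving path to a local optimum has length at least $2^{c'\ell}$ for a fixed constant $c' > 0$, where $\ell$ is the number of variables. Transferring this through the equivalence above produces the desired infinite family of $NK$ instances (with $K \geq 2$) and initial vertices $v$ from which every adaptive path is exponentially long.

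The step I expect to be the main obstacle is the dimension bookkeeping. The $NK$ landscape lives on $\{0,1\}^{m+\ell}$, so the $n$ of the corollary is $m+\ell$, whereas the imported bound is exponential in the number of Weighted 2SAT variables $\ell$ alone. To conclude a clean $2^{cn}$ in the hypercube dimension I need the hard family to use only $O(\ell)$ clauses, so that $m + \ell = \Theta(\ell)$ and $2^{c'\ell} = 2^{\Omega(n)}$; verifying (or arranging) that the tightly complete Weighted 2SAT instances can be taken with a linear number of clauses is the delicate point. If only a superlinear clause count is available, the same argument still delivers a super-polynomial bound but with a smaller exponent (for instance $2^{\Omega(\sqrt n)}$ when $m = \Theta(\ell^2)$), and the constant $c$ must be shrunk accordingly.
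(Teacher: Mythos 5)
Your proposal is correct and is essentially the paper's own proof: the paper likewise starts at the vertex with $b = 1\cdots 1$, asserts a bijection between adaptive paths in the NK landscape and weight-increasing paths of the Weighted 2SAT instance, and invokes Theorem~5.15 of~\cite{SY91}. What you add --- the explicit decomposition $f(b,x) = c(x) + \#\{k : b_k = 1\}$, the tie-breaking perturbation to handle neutral steps, and the caveat that the clean $2^{cn}$ bound requires the hard 2SAT family to have $O(\ell)$ clauses (otherwise one only gets a bound like $2^{\Omega(\sqrt{n})}$ in the hypercube dimension) --- is rigor that the paper's two-sentence proof silently elides, not a different approach.
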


\begin{proof}
If the initial vertex has $s = 11...1$ then there is a bijection between adaptive paths in the fitness landscape and any weight-increasing path for optimizing the weighted 2SAT problem. Thus, theorem~5.15 of~\cite{SY91} applies.
\end{proof}

This result holds independent of the relationship between polynomial time and PLS. There are some landscapes and initial organisms, such that any rule we use for adaptation that only considers fitter single-gene mutants will take an exponential number of steps to find the local optimum.

\section{Exponential adaptive paths on non-rugged landscapes}
\label{sec:simplex}

A fitness landscape where there is no short adaptive path to a fitness peak is certainly very complicated. However, even in landscapes with short adaptive paths, it is not always possible for evolution to find them. This follows from an algorithm analysis result~\cite{MS06} showing that the random edge simplex algorithm can be exponential on abstract cubes. In particular, I show that there exist single-peaked fitness landscapes with no reciprocal sign epistasis, on which the expected length of an adaptive path following SSWM to the optimum from a randomly selected vertex will (with high probability) be super-polynomial. It is not surprising to find the simplex algorithm in this context, since we can regard it as a local search algorithm for linear programming where local optimality coincides with global optimality.

\begin{myDef}
A directed acyclic orientation of a hypercube $\{0,1\}^n$ is called an \emph{acyclic unique sink orientation (AUSO)} if every subcube (face; including the whole cube) has a unique sink.
\end{myDef}

A fitness graph that is an AUSO has a single peak and is not rugged. Since, a complete lack of epistasis produces a smooth landscape by prop.~\ref{prop:smooth} and by cor.~\ref{cor:fast} evolution is efficient. Introducing sign epistasis is the smallest modification we can do in terms of epistatic structure to produce a landscape in which evolution is not efficient. Since every subcube of an AUSO has a unique sink, it means that any empirical observation of a few loci of a bigger AUSO will result in an empirical AUSO. In particular, this means that we already know empirical landscapes that are more complicated than AUSOs, for example the fitness graph in figure~\ref{fig:emprough}.

\begin{prop}
Fitness landscapes without reciprocal sign epistasis are AUSOs
\end{prop}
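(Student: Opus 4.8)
The orientation is acyclic for free: as noted when fitness graphs were introduced, edges point from lower to higher fitness, so $f$ strictly increases along every directed edge and no directed cycle can occur; this acyclicity is inherited by every subcube. Since a finite directed acyclic graph always has a sink, every face has \emph{at least} one sink, so the entire content of the proposition is that every face has \emph{at most} one. First I would translate the hypothesis into the language of sinks. On a $2$-face (a $4$-cycle) an acyclic orientation has equally many sources and sinks, and this common number is $1$ or $2$; the value $2$ is exactly the reciprocal sign epistasis pattern of figure~\ref{fig:rse} (two antipodal sinks and two antipodal sources). Hence ``no reciprocal sign epistasis'' is equivalent to ``every $2$-face has exactly one sink,'' and the claim becomes: an acyclic orientation of $\{0,1\}^n$ in which every $2$-face has a unique sink has a unique sink on every face.

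I would prove uniqueness by induction on the dimension of the face, the cases of dimension $\le 2$ being the hypothesis itself. For the inductive step, suppose a face $F$ has two distinct sinks $m_1 \ne m_2$. If they differ in fewer than $\dim F$ coordinates, the subface they span is a proper face of $F$ on which both are still sinks (restricting to fewer free directions only removes candidate out-edges), contradicting the inductive hypothesis; so $m_1$ and $m_2$ must be \emph{antipodal} in $F$. Splitting $F$ along one free coordinate $i$ into opposite facets $F_0 \ni m_1$ and $F_1 \ni m_2$, the inductive hypothesis makes each facet an AUSO, so $m_1$ is the \emph{unique}, hence global, fitness maximum of $F_0$ and $m_2$ that of $F_1$; in particular every vertex of a facet has an $f$-increasing (adaptive) path to its maximum.

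The main obstacle is that this antipodal configuration is perfectly consistent \emph{locally}: both $m_1$ and $m_2$ genuinely are local maxima, so no purely local comparison can eliminate either one, and the contradiction must instead come from acyclicity. The plan is to exploit acyclicity by tracking, along an $f$-increasing path $m_1^{\oplus i} = w_0 \to \dots \to w_q = m_2$ inside $F_1$, the orientation of the cross-edges in direction $i$: at $w_0$ this edge points down into $F_0$ (because $m_1$ is a sink of $F$) while at $w_q = m_2$ it points up into $F_1$ (because $m_2$ is a sink of $F$), so the cross-edge orientation must flip at some step $w_t \to w_{t+1}$. At a flip, the $\{i,j\}$-face through $w_t$ and $w_{t+1}$ has $w_{t+1}$ as a sink, and its unique-sink hypothesis forces the parallel edge in $F_0$ to point the same way as $w_t \to w_{t+1}$. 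Feeding these forced parallel edges back into the draining structure of the two facets (every vertex flows to its maximum) lets me assemble a closed directed walk alternating between $F_0$ and $F_1$, contradicting acyclicity. The delicate part -- and where I expect to spend the most care -- is the bookkeeping that guarantees the forced edges actually close up into a cycle rather than merely marching through the cube; this is precisely the step that uses acyclicity in an essential way, since the explicit $3$-cube configurations with two antipodal sinks and all square faces unique-sinked are exactly the ones that contain a directed $6$-cycle.
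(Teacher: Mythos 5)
Everything in your proposal up to the last step is correct and carefully argued: acyclicity is indeed free, ``no reciprocal sign epistasis'' is exactly ``every $2$-face has a unique sink,'' two sinks of a face must be antipodal by the inductive hypothesis, each facet $F_0,F_1$ is an AUSO draining to $m_1,m_2$ respectively, and the cross-edge flip together with the forced parallel edge $w_t^{\oplus i}\to w_{t+1}^{\oplus i}$ is sound. But the proof has a genuine gap exactly where you flag it. After one round of your construction, the forced edge deposits you at a vertex of $F_0$ whose cross-edge points up; running the same flip argument along a draining path of $F_0$ toward $m_1$ then forces an edge back into $F_1$, landing at a down-vertex of $F_1$ --- and that vertex need not be $m_1^{\oplus i}$ or anything you have visited before, so no closed walk has been exhibited. (It closes up in the $3$-cube only because there the $F_0$-flip is forced to occur on the last edge into $m_1$.) As written, the inductive step --- which is where the entire content of the theorem lives --- is a plan, not a proof, and you acknowledge as much.

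The missing idea is small, and it dissolves rather than solves your ``bookkeeping'' worry: you do not need the walk to close up by design, because ``merely marching through the cube'' forever is already impossible. Alternate the two phases indefinitely: from a down-vertex of $F_1$ walk toward $m_2$ (an up-vertex), so a flip occurs and its forced edge drops you into $F_0$ at an up-vertex; walk in $F_0$ toward $m_1$ (a down-vertex), so a flip occurs and its forced edge lifts you into $F_1$ at a down-vertex; repeat. Every step traverses a directed edge, so fitness strictly increases along the entire walk, yet the walk never terminates --- impossible on a finite cube (equivalently, a revisited vertex would yield a directed cycle, contradicting acyclicity). With that observation your induction closes. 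Note also how different this is from the paper's own proof, which is essentially a citation: it observes that passing to a subcube cannot create reciprocal sign epistasis and then invokes the known theorem that reciprocal sign epistasis is necessary for multiple peaks (the appeal to prop.~\ref{prop:smooth} is really an appeal to that result of~\cite{PKWT07,PSKT11}, since prop.~\ref{prop:smooth} assumes the absence of \emph{all} sign epistasis, a stronger hypothesis than the proposition allows). The classical proof of that cited theorem is different again: among all paths joining two putative peaks, take one maximizing its minimum fitness and inspect the $2$-face spanned at its lowest interior vertex. So your route, once patched as above, is a correct, self-contained alternative to the paper's literature-based argument.
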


\begin{proof}
($\Leftarrow$) In an AUSO, a subcube like the one in figure~\ref{fig:rse} cannot exist because it has two sinks, therefore an AUSO has no reciprocal sign epistasis.
($\Rightarrow$) Take any subcube of $\{0,1\}^n$, since we only removed vertices from the graph, we cannot have introduced any reciprocal sign epistasis where there was not before ,so by prop~\ref{prop:smooth} there is a unique fitness maximum (sink).
\label{prop:AUSO}
\end{proof}

\begin{prop}
	If there is no reciprocal sign epistasis then there always exists an adaptive path from any vertex $v$ to the unique fitness maximum $v^*$ of length equal to the number of bits on which $v$ and $v^*$ differ.
	\label{prop:short}
\end{prop}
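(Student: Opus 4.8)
The plan is to induct on the Hamming distance $d = d(v,v^*)$ between the starting vertex and the unique peak. The base case $d = 0$ is immediate, since then $v = v^*$ and the empty path suffices. For the inductive step it is enough to establish one key claim: whenever $v \neq v^*$, there is a coordinate on which $v$ and $v^*$ disagree such that flipping it is an \emph{adaptive} step. Flipping such a coordinate yields a neighbour $v'$ with $d(v',v^*) = d - 1$; applying the inductive hypothesis to $v'$ gives an adaptive path of length $d - 1$ from $v'$ to $v^*$, and prepending the step $v \to v'$ produces an adaptive path of length $d$, exactly as claimed.

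To prove the claim I would pass to the subcube $Q$ spanned by $v$ and $v^*$: the set of all vertices agreeing with $v$ (equivalently with $v^*$) on every coordinate where $v$ and $v^*$ already agree, and free on the $d$ coordinates where they differ. This is a $d$-dimensional face of $\{0,1\}^n$ containing both $v$ and $v^*$, and the edges inside $Q$ are precisely those that flip one of the $d$ disagreement coordinates, so any outgoing edge of $v$ within $Q$ automatically decreases the distance to $v^*$. By proposition~\ref{prop:AUSO} the landscape is an AUSO, so $Q$ has a unique sink. Since $v^*$ is the global fitness maximum it has no outgoing edges at all, hence none inside $Q$, so $v^*$ is a sink of $Q$; by uniqueness it is \emph{the} sink of $Q$. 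Because $v \neq v^*$, the vertex $v$ is therefore not the sink of $Q$ and must have an outgoing edge inside $Q$ --- precisely the adaptive, distance-decreasing step the claim asserts.

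The only real subtlety --- and where I would be most careful --- is the interplay between being ``locally a sink of $Q$'' and being ``globally the peak.'' One must verify that global maximality genuinely forces $v^*$ to be a sink of the smaller face $Q$ (it does, since having no outgoing edges in the whole hypercube certainly means having none inside a face), and that the unique-sink property of $Q$ then excludes any other sink, so that every non-peak vertex of $Q$ has a strictly improving neighbour within $Q$. Once this is pinned down there are no further obstacles: each inductive step strictly reduces the Hamming distance, the resulting walk is adaptive by construction, and it terminates at $v^*$ after exactly $d(v,v^*)$ steps. As an alternative one could instead invoke proposition~\ref{prop:smooth} restricted to $Q$ --- a face of an AUSO has no reciprocal sign epistasis, hence is a smooth landscape whose shortest $v$-to-$v^*$ paths are adaptive --- but the direct unique-sink argument is cleaner and avoids re-deriving the smoothness characterization.
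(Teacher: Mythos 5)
Your proof is correct and follows essentially the same route as the paper's: an induction in which one restricts to the subcube spanned by the coordinates where $v$ and $v^*$ disagree, and uses the unique-sink (AUSO) property to extract an adaptive, distance-decreasing step. The only differences are organizational---you induct on Hamming distance with a single unified case and invoke Proposition~\ref{prop:AUSO} explicitly, whereas the paper inducts on the cube dimension, splits into an antipodal case and a restriction-to-subcube case, and leaves the unique-sink justification (why a non-peak vertex cannot be a sink of the subcube) implicit.
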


\begin{proof}
	This proposition is true for $\{0,1\}^1$; continue by induction:
	\begin{enumerate}
	\item If $u$ is at distance $n$ from $v^*$ then all adjacent vertices are at lower distance, and since $u$ is not a sink there must be an adaptive edge from it to vertex $u'$ by flipping $u_i$. Look at the subcube with the $i$-th bit fixed to $u_i$, by IH $u'$ has an adaptive path of length $n - 1$ to $v^*$.
	\item If $u$ is at distance $k < n$ from $v^*$. Look at the $k$-subcube of variables on which $u$ and $v^*$ differ (i.e. our subcube consists of all vertex $v'$ such that if $u_i = v^*_i$ then $u_i = v'_i$), by IH $u$ has an adaptive path of length $k$ to $v^*$ in this subcube.
	\end{enumerate}
\end{proof}

Proposition~\ref{prop:short} and the minimal change from a smooth landscape might suggest that the optimum in an AUSO will be easy to find, but a result adapted from the analysis of simplex algorithms shows this intuition to be misguided:

\begin{thm}[\cite{MS06} in biological terminology]
There are positive constants $c_1,c_2$ such that for all sufficiently large $n$ there exists an AUSO fitness landscape on $\{0,1\}^n$ such that SSWM dynamics starting from a random vertex, with probability at least $1 - e^{-c_1n^{1/3}}$ follows an adaptive path of at least $e^{c_2n^{1/3}}$ steps to the fitness maximum.
\end{thm}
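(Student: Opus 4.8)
The plan is to treat this theorem as a translation of the Matou\v{s}ek--Szab\'o result~\cite{MS06} into the language of fitness landscapes, so the work splits into establishing a faithful dictionary between the two settings and then importing their construction. First I would make precise the three correspondences on which everything rests. By proposition~\ref{prop:AUSO} a fitness landscape has no reciprocal sign epistasis exactly when its fitness graph is an AUSO, and such a landscape has a single peak, namely the global sink. Conversely, I would verify that \emph{every} AUSO arises from some fitness function: because an AUSO is acyclic by definition, a topological sort of its vertices assigns each vertex an integer height so that every edge points from the lower to the higher value, giving a legitimate fitness function in which all adjacent fitnesses are distinct (consistent with the standing no-ties assumption). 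Hence the family of AUSO fitness landscapes and the family of abstract cubes studied in~\cite{MS06} are the same objects.

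The second correspondence is dynamical. At any non-sink vertex $v_t$, SSWM dynamics sample uniformly among the neighbours of strictly higher fitness, i.e. uniformly among the outgoing edges of the fitness graph; this is precisely the \emph{random edge} pivot rule, as already noted just after the definition of SSWM. So once an AUSO is realized as a fitness landscape by the height function above, the SSWM trajectory started at a vertex $v$ and the random-edge trajectory started at the corresponding vertex of the abstract cube are the \emph{same} stochastic process: they induce identical distributions over adaptive paths, with the fitness peak playing the role of the unique sink, and a uniformly random start matching on both sides.

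With the dictionary in place I would invoke the construction of~\cite{MS06}: for all sufficiently large $n$ they exhibit an AUSO of $\{0,1\}^n$ on which random edge, begun at a uniformly random vertex, with probability at least $1 - e^{-c_1 n^{1/3}}$ performs at least $e^{c_2 n^{1/3}}$ pivot steps before reaching the sink. Realizing this AUSO as a fitness landscape and transporting the statement through the two correspondences yields, with the same probability, an adaptive SSWM path of at least $e^{c_2 n^{1/3}}$ steps to the unique fitness peak, which is exactly the claim. By proposition~\ref{prop:short} a path of length at most $n$ to $v^*$ nonetheless exists from every vertex, so the landscape really is non-rugged and the difficulty lies entirely in SSWM's failure to find that short path.

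The substantive mathematical content, and hence the main obstacle, lives inside~\cite{MS06}: the delicate combinatorial design of an AUSO on which random edge is fooled into taking exponentially many steps, together with the concentration argument giving the high-probability bound over the random start. On the biological side the only genuine care required is to confirm that the translation loses nothing -- that realizability via topological sort produces an honest fitness function, that random edge and SSWM coincide as processes (including the uniform, tie-free choice among improving moves), and that the starting distribution and sink conventions of~\cite{MS06} line up with ``random vertex'' and ``fitness peak''. Once these bookkeeping points are checked, the probability and the step count transfer without loss.
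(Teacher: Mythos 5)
Your proposal is correct and matches the paper's approach exactly: the paper presents this theorem as a direct restatement of the Matou\v{s}ek--Szab\'o result, relying (without further proof) on precisely your dictionary --- Proposition~\ref{prop:AUSO} for the AUSO/no-reciprocal-sign-epistasis correspondence and the earlier footnote identifying SSWM with the random edge pivot rule. Your additional check that every AUSO is realizable as a fitness function via a topological sort is a bookkeeping point the paper leaves implicit, so you have if anything been slightly more careful.
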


In other words, multiple peaks or even reciprocal sign-epistasis are not required to make a complex fitness landscape. In fact, AUSOs were developed to capture the idea of a linear function on a polytope. Linear fitness functions are usually considered to be some of the simplest landscapes by theoretical biologists; showing that adaptation is hard on these landscapes or ones like them is a surprising result. Although most AUSOs are not given by any linear function (including the one constructed by Matousek \& Szabo~\cite{MS06}), they are `close in spirit' to linear. Further, similar lower bounds have been shown for linear functions~\cite{FHZ11}, but unfortunately the polytope is not a hypercube. An open question at the intersection of the analysis of simplex algorithms and fitness landscapes is to find a linear function on hypercubes for which SSWM dynamics cannot efficiently find the optimum.

\section{Discussion}

I have shown that the assumption of an easily reachable fitness peak is not reasonable in the $NK$ model with $K \geq 2$, or even in some single-peaked landscapes without any reciprocal sign epistasis. Even with an adaptive path of length equal to the number of differing bits existing, it is possible to have landscapes where SSWM dynamics take an exponential number of steps. In the $NK$ model, going from epistasis restricted to pairs to triplets (i.e. from $K = 1$ to $K =  2$) makes the model as complex as any fitness landscape; and makes local fitness peaks impossible to find in polynomial time (under a standard computational complexity assumption) regardless of what process evolution uses. Since a static fitness landscape is only an accurate approximation on short to moderate timescales, we cannot expect long adaptive walks to have a chance to converge. This means one of three things for biology:

\begin{enumerate}
\item Abandon the fitness landscape metaphor.
\item Redefine or restrict existing fitness landscape models, and show that these modifications are empirically reasonable and that fitness peaks are efficiently reachable in these models. Try to provide a general treatment with as few assumptions as possible.
\item Accept that local equilibrium assumptions are unjustified, and develop a theory that is comfortable with organisms that are far from equilibrium and always have more room to adapt.
\end{enumerate}

The first option has already been pursued by theoretical computer scientists in earlier work  on evolution. Valiant's model~\cite{evolvability} moves past the fitness landscape metaphor by suggesting that we view organisms as protein circuits and evolution as a learning algorithm that is trying to have the organisms approximate an ideal function. This approach provides a beautiful synthesis of machine learning, artificial intelligence, and evolution, but is not presented in a language familiar to biologists or form that is easily ammendable to experimental studies. This subfield uses significantly more powerful techniques than I used here, and computer scientists have steadily moved Valiant's model forward to address difficulties that I did not deal with explicitly in this presentation; like drifting evolutionary targets~\cite{KVV10}, and the effects of sexual reproduction with recombination~\cite{K11}. Unfortunately, the last five years have shown that biologists do not favour the abandonment of fitness landscapes, with citations coming only from other computer scientists.

The second option has seen a healthy development in Orr-Gillespie theory~\cite{G91,O02,O06}. This approach is based on Gillespie's insight that the wild-type (initial vertex) represents a draw from the tail of a fitness distribution, and beneficial mutations are even more extreme draws from this tail. At each step, evolution samples several times from the distribution and if one of the mutants has higher fitness than our current value, we take an adaptive step, and repeat the process. By using extreme value theory, Orr-Gillespie makes fewer assumptions about underlying distributions than previous treatments and only has to specify a tail shape parameter (although this is still more than the worst-case analysis used here). Unfortunately, sampling from fixed fitness distributions (or even ones that depend on the current fitness, but not position in the landscape~\cite{KTP09}) is an extremely local approximation, and ignores the combinatorial structure of the landscape. Without looking at the graph of adjacent genotypes, it is simply not possible to study epistasis fully, even if some single adaptive path measures have been suggested~\cite{KTP09}.

The third option is my preference because I believe it is the most productive. There are many qualities, like biological complexity~\cite{M91}, that seem to increase constantly during evolution. Since it is natural to expect that the complexity fitness landscape will be complex, my results on the unreachability of local peaks provides an explanation for the constant growth. Since the adaptive walk will take an exponential number of steps to converge, we will never see it reach equilibrium and thus always observe an increase in complexity. Without the assumption of being at local equilibrium, we also have to change our language from ``adapted for" to ``adapting to", which helps avoid teleological tangents. If we are not allowed to assume that organisms are at local fitness peaks, then we have to always think of evolution as a process, not a destination.

\section*{Acknowledgements}

I would like to thank Julian Z. Xue for introducing me to the $NK$ model, and for many helpful discussion; and David Basanta and Jacob Scott for hosting me at the Moffitt Cancer Research Institute where I could present the preliminaries of this work to an audience of mathematical biologists. This paper benefited from comments by Eric Bolo, Daniel Nichol, and Prakash Panangaden.

\bibliography{references}
\bibliographystyle{alpha}

\end{document}